\begin{document}
\begin{frontmatter}
  \title{On necessary and sufficient conditions for proto-cell stationary growth} \author[X,P7]{E. Bigan
  \thanksref{ALL}\thanksref{myemail}}
  \author[X]{J.M. Steyaert\thanksref{coemail}}
  \author[P7]{S. Douady\thanksref{coemail2}}
  \address[X]{Laboratoire d'Informatique\\\'{E}cole Polytechnique\\
    Palaiseau, France}
  \address[P7]{Laboratoire Mati\`{e}re et Syst\`{e}mes Complexes\\ Universit\'{e} Paris Diderot\\
    Paris, France}    
    \thanks[ALL]{The authors would like to thank Lo\"{i}c Paulev\'{e}, Laurent Schwartz and Pierre Legrain for fruitful discussions.} \thanks[myemail]{Email:
    \href{mailto:erwan.bigan@m4x.org} {\texttt{\normalshape
        erwan.bigan@m4x.org}}}\thanks[coemail]{Email:
    \href{mailto:jean-marc.steyaert@polytechnique.edu} {\texttt{\normalshape
        jean-marc.steyaert@polytechnique.edu}}}\thanks[coemail2]{Email:
    \href{mailto:stephane.douady@univ-paris-diderot.fr} {\texttt{\normalshape
        stephane.douady@univ-paris-diderot.fr}}}
\begin{abstract} 
We consider a generic proto-cell model consisting of any conservative chemical reaction network embedded within a membrane. The membrane results from the self-assembly of one of the chemical species (membrane precursor) and is semi-permeable to some other chemical species (nutrients) diffusing from an outside growth medium into the proto-cell. Inside the proto-cell, nutrients are metabolized into all other chemical species including the membrane precursor, and the membrane grows in area and the proto-cell in volume. Investigating the conditions under which such a proto-cell may reach stationary growth, we prove that a simple necessary condition is that each moiety be fed with some nutrient flux; and that a sufficient condition for the existence of a stationary growth regime is that every siphon containing any species participating in the membrane precursor incorporation kinetics also contains the support of a moiety that is fed with some nutrient flux. These necessary and sufficient conditions hold regardless of chemical reaction kinetics, membrane parameters or nutrient flux diffusion characteristics.
\end{abstract}
\begin{keyword}
  proto-cell, chemical reaction network.
\end{keyword}
\end{frontmatter}
\section{Introduction}\label{intro}
Proto-cell models are useful to help understand minimal requirements for the emergence of life~\cite{Kaneko,Kondo}. Models based on random complex chemical reaction networks (CRNs) have been developed to this end. Simpler autopoietic systems have also been proposed~\cite{LV}. All such models rely upon some CRN embedded inside a membrane that is synthesized from within. The membrane is semi-permeable to some nutrients flowing from an outside growth medium into the proto-cell. Owing to the embedded CRN, nutrients are metabolized into all other constituents, including the membrane precursor that gets incorporated into the membrane. The membrane thus grows in area and the proto-cell in volume, thereby diluting all constituents. A stationary growth regime was typically reached in previously proposed models~\cite{Kondo,LV,Bigan}, but it is unclear what grants such a property to the proto-cell. The present work addresses this question by proving necessary and sufficient conditions for such stationary proto-cell growth.

We introduce a generic proto-cell model consisting of any conservative chemical reaction network (CRN) embedded within a membrane. The membrane results from the self-assembly of one of the chemical species participating in the CRN (membrane precursor). This membrane is semi-permeable to some other chemical species (nutrient) diffusing across the membrane between an outside growth medium and the inside of the proto-cell. We prove a simple algebraic necessary condition as well as a topological sufficient condition for stationary proto-cell growth. Both conditions are independent of reaction kinetics, membrane characteristics or nutrient flux.

This paper is organized as follows. Section~\ref{model} presents the proto-cell model. Section~\ref{necsuffcond} gives the proof of the simple algebraic necessary condition as well as that of the sufficient topological condition. Section~\ref{discussion} is devoted to discussion. And Section~\ref{conclusion} gives a conclusion.

\section{Proto-cell model}
\label{model}

\subsection{Conservative CRNs}
We first define conservative CRNs and  their attributes. Following standard practice in Chemical Reaction Network Theory \cite{Aris,Erdi} a stoichiometry $N\times R$ matrix $S$ may be associated to any CRN where $N$ is the total number of chemical species $\mathscr{S}=\{A_{i}\}_{i=0,\dots,N-1}$ participating in the CRN and $R$ is the total number of reactions.
\begin{definition}
A CRN is \emph{conservative} if there exists a strictly positive $N\timesÊ1$ vector $\textbf{m}$ such that $\textbf{m}^{T}S=0$ where $(.)^{T}$ denotes the transpose of $(.)$. The $i^{th}$ coordinate of $\textbf{m}$, denoted as $m_{i}$, represents the molecular mass of species $A_{i}$.
\end{definition}
Basically, a CRN is conservative if each chemical species may be assigned a positive mass such that mass conservation be guaranteed for each chemical reaction. There may exist multiple solutions to the set of mass conservation equations. This notion is explicited through the definition below.
\begin{definition}
For a conservative CRN, the kernel of the transpose of $S$ denoted $Ker(S^{T})$ has dimension $dim(Ker(S^{T}))=p\geq1$, and the rank of $S$ is $dim(Im(S))=dim(Im(S^{T}))$ which is equal to $N-dim(Ker(S^{T}))=N-p\leq N-1$. The set of mass vectors compatible with mass conservation is $\{\textbf{m}~|~m_{i}>0,~i=0,\dots,N-1~and~\textbf{m}^{T}S=0\}$. This set constitutes a pointed convex cone having $p'\geq p$ generating vectors $\{\textbf{b}_{k}\}_{k=0,\dots, p'-1}$, $p$ of which are linearly independent and constitute a basis of $Ker(S^{T})$~\cite{SH}. We define as \emph{moieties} the elements of such a set of generating vectors $\{\textbf{b}_{k}\}_{k=0,\dots, p-1}$.
\end{definition}
Moieties essentially correspond to positive linear combinations of chemical species concentrations that are left invariant by the chemical reactions.
\begin{definition}
The \emph{support} of a moiety or positive linear combination of moieties $\textbf{b}$ denoted as $supp(\textbf{b})$ is the subset of chemical species $A_{i}$ along which $\textbf{b}$ has non-zero components.
\end{definition}

\subsection{Model assumptions}
The following assumptions are made in the proposed proto-cell model:
\begin{itemize}
\item Different chemistry inside and outside the cell: the CRN is active only inside the cell. This assumption is necessary for the cell to maintain itself out-of-equilibrium as the sole result of chemical reactions. It is verified in actual biological systems as the internal chemistry relies upon enzymes that undergo denaturation and/or that cannot be maintained at sufficiently high concentration outside the cell. The mechanisms through which such a different chemistry is maintained are beyond the scope of this paper.
\item Filament shape: as observed for ancestral biological systems such as bacteria and most archaea. With this assumption, the $Area/Volume$ ratio is kept constant when the cell is growing.
\item Self-assembly of one of the chemical species (membrane precursor $A_{me}$) in a structured membrane: the incorporation of $A_{me}$ into the growing membrane is assumed to be kinetically controlled, with rate $F_{me}$ per unit area, or $f_{me}$ per unit volume (the corresponding rate vector $\textbf{F}_{me}$, respectively $\textbf{f}_{me}$, has only one non-zero component: $F_{me}$, respectively $f_{me}$, along $A_{me}$). Owing to the previous assumption these rates per unit volume and per unit area are simply linked to one another through the constant $Area/Volume$ multiplicative factor: $f_{me}=F_{me}\times(Area/Volume)$. The membrane is further characterized by the number of molecules per unit area $N_{mea}$. Owing to the previous assumption, $N_{mea}$ can be multiplied by the constant $Area/Volume$ to result in an equivalent concentration: $C_{me}=N_{mea}\times(Area/Volume)$. This is the effective membrane concentration as if all self-assembled molecules were dissolved in the cell volume.
\item Membrane precursor incorporation kinetics: the membrane precursor incorporation rate $F_{me}$ per unit area (or $f_{me}$ per unit volume) is assumed to be a continuous monotonically increasing function of the concentration vector $\textbf{A}$ such that $F_{me}(\textbf{A}=\textbf{0})=0$ and $F_{me}(\textbf{A})>0$ iff $\textbf{A}$ has non-zero components along a subset $\mathscr{S}_{me}$ of $\mathscr{S}$. $\mathscr{S}_{me}$ includes at least $A_{me}$ (membrane precursor must be present in the cytoplasm for it to be incorporated in the membrane) and may also include other species (e.g. enzymes or other metabolites that may be required in case of catalyzed or active membrane precursor incorporation). This is a very mild assumption as it is verified by all foreseeable kinetics (mass-action, Michaelis-Menten, or active membrane precursor incorporation).
\item Semi-permeability of self-assembled membrane to a subset $\mathscr{S}_{nu}$ of $\mathscr{S}$ diffusing from an outside growth medium into the cell: the resulting nutrient flux vector is $\textbf{F}_{nu}$ per unit area and $\textbf{f}_{nu}$ per unit volume, with $\textbf{f}_{nu}=\textbf{F}_{nu}\times(Area/Volume)$. All proofs in the present work hold if $\textbf{f}_{nu}$ is either an independent control vector, or if it results from a saturating diffusion mechanism:

\begin{equation}
f_{nu,i}=f_{nu,i}(\lbrack A_{i}\rbrack)=\frac{\mathcal{D}_{i}(\lbrack A_{i,out}\rbrack-\lbrack A_{i}\rbrack)}{1+\frac{\lbrack A_{i,out}\rbrack}{K_{i,out}}} \hspace{4mm}\forall i\in\mathscr{S}_{nu}
\end{equation}

where for each nutrient species $A_{i}\in \mathscr{S}_{nu}$, $f_{nu,i}$ is the $\textbf{f}_{nu}$ component along $A_{i}$, $\mathcal{D}_{i}$ is the effective diffusion coefficient, $\lbrack A_{i,out}\rbrack$ is the fixed nutrient concentration in the outside growth medium, and $K_{i,out}$ is the saturation concentration.
\item Homogeneous concentrations: all chemical species are assumed to be homogeneously distributed and any intracellular diffusion effect is neglected. This is a simplifying assumption as one should expect nutrient $A_{i}\in \mathscr{S}_{nu}$ (resp. membrane precursor $A_{me}$) concentration to be highest (resp. lowest) near the membrane that acts as an effective source (resp. sink) for such chemical species. Similarly, any intracellular spatial organization is neglected.
\end{itemize}

Figure~\ref{protocell} gives a schematic representation of the proto-cell model.

\begin{figure}[!h]
   \includegraphics[scale=0.5]{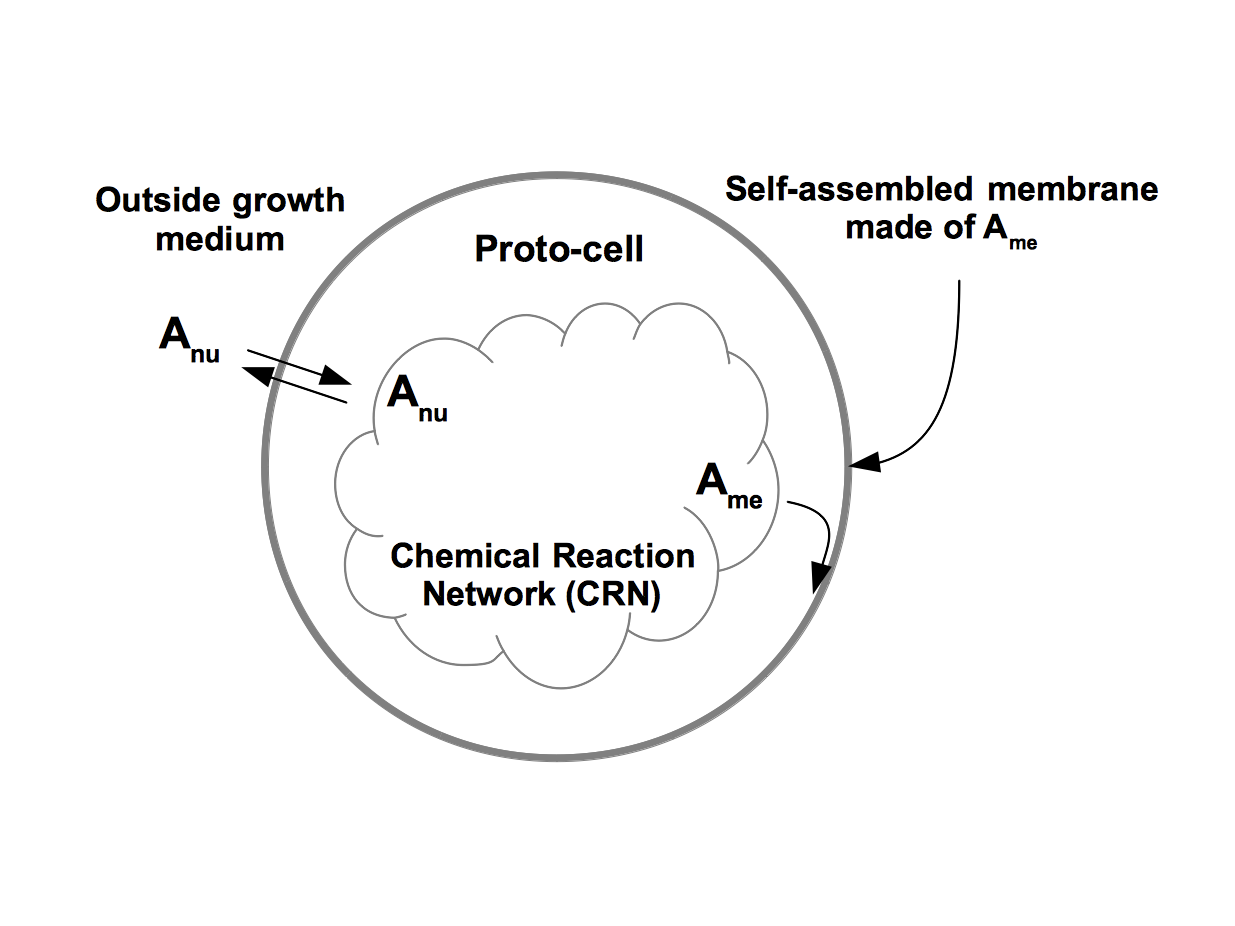}
   \caption{Schematic of the proto-cell model. The nutrient $A_{nu}$ flows bidirectionally across the membrane by diffusion. The membrane results from the self-assembly of the membrane precursor $A_{me}$. The membrane surface area grows as the result of $A_{me}$ unidirectional incorporation in the self-assembled membrane.}
   \label{protocell}
\end{figure}

\subsection{Ordinary Differential Equation (ODE) system}
With the above assumptions, the ODE system governing the time evolution of concentrations is given by:

\begin{equation}
\frac{d \textbf{A}}{dt}Ê=ÊS\textbf{f}+\textbf{f}_{nu}-\textbf{f}_{me}-\lambda\textbf{A}	\label{eq:ODE}
\end{equation}

where $S$ is the $N\times R$ stoichiometry matrix, $\textbf{A}$ is the $N\timesÊ1$ concentration vector having as components the $N$ species concentrations, $\textbf{f}$ is the $R\timesÊ1$ rate vector having as components the rates of each chemical reaction, $\textbf{f}_{nu}$ is the $N\timesÊ1$ nutrient flux vector with only non-zero components $f_{nu,i}$ for $A_{i} \in\mathscr{S}_{nu}$, $\textbf{f}_{me}$ is the $N\timesÊ1$ membrane incorporation flux vector with only non-zero component $f_{me}$ along $A_{me}$, and $\lambda\textbf{A}$ represents the dilution factor with growth rate $\lambda=(1/Volume)dVolume/dt$. From the second assumption, we have $\lambda=(1/Area)dArea/dt$ because the $Area/Volume$ ratio is constant. The latter expression for $\lambda$ can be related to $f_{me}$ and $C_{me}$ as follows. As $A_{me}$ gets incorporated into the growing membrane with rate (per unit area) $F_{me}$ per unit area, the membrane area grows as $(1/Area)dArea/dt=F_{me}/N_{mea}=f_{me}/C_{me}$. This gives:

\begin{equation}
\lambda=Ê\frac{f_{me}}{C_{me}}
\end{equation}

The above self-consistent set of equations thus gives the system of ODEs governing the time evolution of concentrations for the growing proto-cell. In the case where $\textbf{f}_{nu}$ is considered as a fixed control vector, the proto-cell ODE system is reminiscent of that for a Continuous Flow Stirred Tank Reactor (CFSTR)~\cite{Craciun}, a signicant difference being that the fixed ouput rate for a CFSTR is replaced by the variable growth rate $\lambda$ for the proto-cell.

\section{Necessary and sufficient conditions for stationary proto-cell growth}
\label{necsuffcond}

\subsection{Simple algebraic necessary condition}
\begin{theorem}
For a proto-cell to reach stationary growth, it is necessary that each moiety of the corresponding CRN be fed.
\end{theorem}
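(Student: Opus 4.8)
The plan is to work directly from the stationarity of the ODE system~(\ref{eq:ODE}), exploiting the defining property of a moiety. By \emph{stationary growth} I would mean a regime in which the concentration vector is constant and strictly positive, $\textbf{A}>0$ with $d\textbf{A}/dt=0$, while the cell genuinely grows, $\lambda=f_{me}/C_{me}>0$. A moiety $\textbf{b}$ being \emph{fed} I would take to mean that it receives a strictly positive net nutrient influx, $\textbf{b}^{T}\textbf{f}_{nu}>0$; since $\textbf{f}_{nu}$ is supported on $\mathscr{S}_{nu}$, this in particular forces $supp(\textbf{b})\cap\mathscr{S}_{nu}\neq\emptyset$. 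The whole argument would rest on contracting the stationarity equation against a moiety vector so as to annihilate the reaction term.

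Concretely, I would fix an arbitrary moiety $\textbf{b}$ of the CRN, so that $\textbf{b}\geq0$, $\textbf{b}\neq0$, and $\textbf{b}^{T}S=0$. Setting $d\textbf{A}/dt=0$ in~(\ref{eq:ODE}) and left-multiplying by $\textbf{b}^{T}$ would give
\begin{equation}
0=\textbf{b}^{T}S\textbf{f}+\textbf{b}^{T}\textbf{f}_{nu}-\textbf{b}^{T}\textbf{f}_{me}-\lambda\,\textbf{b}^{T}\textbf{A}.
\end{equation}
The conservation law $\textbf{b}^{T}S=0$ kills the reaction contribution $\textbf{b}^{T}S\textbf{f}$ independently of the unknown kinetics $\textbf{f}$, leaving the balance
\begin{equation}
\textbf{b}^{T}\textbf{f}_{nu}=\textbf{b}^{T}\textbf{f}_{me}+\lambda\,\textbf{b}^{T}\textbf{A}.
\end{equation}

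It would then remain only to sign the right-hand side. Since $\textbf{f}_{me}$ has a single non-negative component $f_{me}\geq0$ along $A_{me}$ and $\textbf{b}\geq0$, the membrane term $\textbf{b}^{T}\textbf{f}_{me}=b_{me}f_{me}\geq0$ acts as a sink, never a source, for the moiety. The dilution term $\lambda\,\textbf{b}^{T}\textbf{A}$ is \emph{strictly} positive, because $\lambda>0$, $\textbf{A}>0$, and $\textbf{b}$ is a nonzero vector with non-negative entries, so $\textbf{b}^{T}\textbf{A}=\sum_{i}b_{i}A_{i}>0$. Adding a non-negative quantity to a strictly positive one yields $\textbf{b}^{T}\textbf{f}_{nu}>0$, i.e. the moiety is fed; and since the moiety was arbitrary, every moiety must be fed.

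I expect no genuinely hard step here: the entire force of the conclusion comes from the strict positivity of the dilution term, which encodes the physical fact that a growing cell continually drains every conserved quantity, while the reactions, which by definition leave moieties invariant, cannot replenish it and membrane incorporation can only deplete it, so the sole possible source is the external nutrient flux. The only points demanding care are definitional: pinning down that stationary growth entails both $\textbf{A}>0$ and $\lambda>0$ (otherwise the dilution term could vanish and the conclusion would fail), and making precise that ``fed'' means a net positive flux $\textbf{b}^{T}\textbf{f}_{nu}>0$ rather than merely a non-empty support intersection. Notably, nothing in the derivation uses the specific form of the kinetics $\textbf{f}$, the membrane parameters entering $\lambda$ and $f_{me}$, or the diffusion law for $\textbf{f}_{nu}$, which is exactly why the condition should hold regardless of these choices.
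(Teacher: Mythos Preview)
Your argument is correct and rests on the same key step as the paper's: left-multiplying the ODE by a moiety vector $\textbf{b}^{T}$ so that $\textbf{b}^{T}S\textbf{f}$ drops out, leaving $\textbf{b}^{T}\textbf{f}_{nu}=\textbf{b}^{T}\textbf{f}_{me}+\lambda\,\textbf{b}^{T}\textbf{A}=\lambda(B_{me}+\textbf{b}^{T}\textbf{A})>0$ at a genuine stationary growth state. The only cosmetic difference is that you evaluate this identity directly at the fixed point, whereas the paper phrases it contrapositively and dynamically (if $\textbf{b}^{T}\textbf{f}_{nu}=0$ then either $\textbf{b}^{T}\textbf{A}\to 0$ or $\lambda\to 0$), which yields a bit of extra information about the asymptotic fate of an unfed moiety but is not needed for the theorem as stated.
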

\begin{proof}
Multiplying both sides of the above ODE system by $\textbf{b}^{T}$ where  $\textbf{b}$ is any of the $p$ moieties $\{\textbf{b}_{k}\}_{k=0,\dots, p-1}$, gives the ODE governing the time evolution of the quantity $\textbf{b}^{T}\textbf{A}$:

\begin{equation}
\frac{d(\textbf{b}^{T}\textbf{A})}{dt}=\textbf{b}^{T}\textbf{f}_{nu}-\lambda(B_{me}+\textbf{b}^{T}\textbf{A})
\end{equation}

where $B_{me}=b_{me}C_{me}$, with $b_{me}$ being the component of $\textbf{b}$ along $A_{me}$. This shows that $\textbf{b}^{T}\textbf{f}_{nu}=0$ results either in an exponentially decreasing $\textbf{b}^{T}\textbf{A}\rightarrow 0$, or $\lambda\rightarrow 0$. In the former case all species asymptotically disappear, while in the latter case at least one of the species in $\mathscr{S}_{me}$ ($A_{me}$ or any other species the presence of which is required for precursor incorporation) asymptotically disappears. A necessary condition for stationary growth is thus that each moiety be fed: $\textbf{b}_{k}^{T}\textbf{f}_{nu}>0$ or equivalently $\mathscr{S}_{nu}\cap supp(\textbf{b}_{k}) \ne\varnothing$, for $k=0,\dots,p-1$.
\end{proof}

\subsection{Topological sufficient condition}
This sufficient condition builds upon the topological concept of siphon introduced by Angeli \emph{et al}~\cite{Angeli} to study persistence in CRNs. Siphons are essentially the sets of chemical species whose absence cannot be compensated by the chemical reactions that take place inside the cell, and may be formally defined as follows.
\begin{definition}
A \emph{siphon} $Z$ is a subset of $\mathscr{S}$ such that for each species $A_{i}$ in $Z$ and every reaction where $A_{i}$ appears as product, then at least one of the reactant species also belongs to $Z$.
\end{definition}
We shall see below that a sufficient condition for the proto-cell ODE system to admit a fixed point with positive growth rate is that every siphon containing any species in $\mathscr{S}_{me}$ also contains the support of a moiety $\textbf{b}$, and that this moiety be fed with some nutrient flux $\textbf{f}_{nu}$, $\textbf{b}^{T}\textbf{f}_{nu}>0$. The positive growth rate stems from the persistence of any species in $\mathscr{S}_{me}$. But this condition is not sufficient to guarantee persistence of all species. Further, if every siphon (not just those containing any species in $\mathscr{S}_{me}$) also contains the support of a moiety (and that this moiety is fed with some nutrient flux), then all chemical species in $\mathscr{S}$ are persistent and present in the fixed point. It should be noted that this condition only guarantees the existence of a fixed point, but does not guarantee its stability, nor its unicity. Numerical analyses on randomly generated CRNs suggest that this fixed point is actually stable and unique, but this remains a conjecture at this stage.

The proof of this topological sufficient condition builds upon the following series of lemmas.

\begin{lemma}
\label{Brouwer}
If the cytoplasmic density $D=\textbf{m}^{T}\textbf{A}$ remains bounded for the entire proto-cell trajectory given by the autonomous differential equation $d\textbf{A}/dt=g(\textbf{A})$ with $g(\textbf{A})=S\textbf{f}+\textbf{f}_{nu}-\textbf{f}_{me}-\lambda\textbf{A}$, then there exists a stationary point $\textbf{A}_{st}$ such that $g(\textbf{A}_{st})=0$.
\end{lemma}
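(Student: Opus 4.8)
The name of the lemma signals the tool: I would prove it with Brouwer's fixed point theorem applied to the time-advance maps of the flow generated by $g$. The plan is to (i) exhibit a compact convex set $K$ that the dynamics cannot leave, (ii) apply Brouwer to the time-$\tfrac1n$ map of the flow on $K$ to obtain short-period (approximate) fixed points, and (iii) pass to the limit to recover a genuine stationary point $\textbf{A}_{st}$ with $g(\textbf{A}_{st})=0$.

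For step (i) the candidate is the capped simplex
\begin{equation}
K=\{\textbf{A}\in\Real^{N}\ :\ \textbf{A}\geq\textbf{0},\ \textbf{m}^{T}\textbf{A}\leq D_{max}\},
\end{equation}
which is compact and convex because $\textbf{m}>\textbf{0}$, so a bound on $D=\textbf{m}^{T}\textbf{A}$ bounds every coordinate. I must check $K$ is forward invariant. On a coordinate face $A_{i}=0$ the dilution term $-\lambda A_{i}$ vanishes, the reaction and (for $A_{me}$) membrane-incorporation contributions either vanish or point inward, and $f_{nu,i}\geq 0$, so $dA_{i}/dt\geq 0$ and the flow cannot exit there; this is the usual invariance of the nonnegative orthant for a CRN with well-behaved kinetics. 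On the cap $\textbf{m}^{T}\textbf{A}=D_{max}$ I would exploit conservativity: since $\textbf{m}^{T}S=0$,
\begin{equation}
\frac{dD}{dt}=\textbf{m}^{T}g(\textbf{A})=\textbf{m}^{T}\textbf{f}_{nu}-f_{me}\Big(m_{me}+\frac{D}{C_{me}}\Big),
\end{equation}
so forward invariance requires $\textbf{m}^{T}\textbf{f}_{nu}\leq f_{me}\big(m_{me}+D_{max}/C_{me}\big)$ on that face, giving $dD/dt\leq 0$; the boundedness hypothesis on $D$ is precisely what allows $D_{max}$ to be chosen so that the trajectory is trapped, which is how I would upgrade "bounded density" into "invariant compact convex region".

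For steps (ii)--(iii), let $\phi_{t}$ denote the flow of $g$. By forward invariance each $\phi_{t}$ maps $K$ continuously into itself, so for every $n$ Brouwer yields $\textbf{A}_{n}\in K$ with $\phi_{1/n}(\textbf{A}_{n})=\textbf{A}_{n}$, whence
\begin{equation}
\textbf{0}=\phi_{1/n}(\textbf{A}_{n})-\textbf{A}_{n}=\int_{0}^{1/n}g\big(\phi_{s}(\textbf{A}_{n})\big)\,ds,
\end{equation}
i.e. the time-average of $g$ over each short orbit vanishes. By compactness of $K$ I extract a subsequence $\textbf{A}_{n}\to\textbf{A}_{st}$; continuity of $g$ and of the flow makes $s\mapsto g(\phi_{s}(\textbf{A}_{n}))$ converge uniformly to the constant $g(\textbf{A}_{st})$ on the shrinking interval $[0,1/n]$, so multiplying the identity by $n$ and letting $n\to\infty$ gives $g(\textbf{A}_{st})=0$, the desired stationary point.

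The main obstacle is step (i), and specifically the cap: coordinate-face invariance is routine, but verifying $dD/dt\leq 0$ on $\textbf{m}^{T}\textbf{A}=D_{max}$ is delicate, because $\textbf{m}^{T}\textbf{f}_{nu}$ can remain positive while $f_{me}$ is small (mass accumulating in species other than $A_{me}$), so no sublevel set of $D$ is invariant \emph{a priori}. This is exactly where the hypothesis must do its work: I would take $D_{max}$ strictly above $\limsup_{t}D$ along the given trajectory and argue, using the sign structure above, that the corresponding capped simplex is forward invariant; mere boundedness of a single orbit is otherwise not enough, since a bounded orbit in dimension $\geq 2$ need not limit onto an equilibrium. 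A secondary technical point is that $\phi_{t}$ must be a well-defined continuous flow, which needs the kinetics to be regular enough (locally Lipschitz) for unique solutions; the stated continuity of $F_{me}$ and of the reaction rates gives existence, with Lipschitzness invoked only to make the flow maps used in Brouwer's argument single-valued and continuous.
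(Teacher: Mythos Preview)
Your approach coincides with the paper's: its proof of this lemma is nothing more than a citation to Wei's generalization of Brouwer's theorem (a nonempty compact convex forward-invariant set for a continuous semiflow contains an equilibrium), with parallel references to Basener~et~al.\ and Gnacadja; your steps~(ii)--(iii) simply reprove that cited result via time-$1/n$ maps and a compactness limit, and your step~(i) supplies the candidate invariant set that the paper leaves implicit. The cap-invariance difficulty you honestly flag is not addressed in the paper either---it invokes Wei's theorem without exhibiting $K$ or checking $dD/dt\le 0$ on the face $\textbf{m}^{T}\textbf{A}=D_{max}$---so your write-up is in fact more explicit than the original on precisely the point you worry about; in the paper's overall argument the intended reading is that Lemma~\ref{upperbound} delivers a bound on $D$ valid for all trajectories (hence the capped simplex is forward-invariant by definition), rather than a bound along a single orbit.
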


\begin{proof}
The proof results from a generalization of Brouwer's fixed point theorem to dynamical systems, and was given by Wei~\cite{Wei}. This result is not specific to chemical systems and holds for any continuous autonomous dynamical system $d\textbf{x}/dt=g(\textbf{x})$ having a semiflow. If a subset K of the entire space of all possible $\textbf{x}$ is non-empty, convex and forward-invariant for the flow of $g$, then there exists a fixed point $\textbf{x}_{0}$ such that $g(\textbf{x}_{0})=0$. The same proof as that given by Wei was given in a different context~\cite{Basener}. A similar proof was also given in its most general context by Gnacadja~\cite{Gnacadja}.
\end{proof}

\begin{lemma}
\label{upperbound}
For any moiety or positive linear combination of moieties $\textbf{b}$ that is fed with some nutrient flux $\textbf{b}^{T}\textbf{f}_{nu}>0$ and such that all concentrations for chemical species in $\mathscr{S}_{me}$ admit strictly positive lower bounds, the corresponding quantity $\textbf{b}^{T}\textbf{A}$ has an upper bound. In particular, $D=\textbf{m}^{T}\textbf{A}$ is bounded.
\end{lemma}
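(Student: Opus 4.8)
The plan is to reduce the claim to a scalar linear differential inequality for the single quantity $y=\textbf{b}^{T}\textbf{A}$. Because $\textbf{b}$ is a moiety or a positive combination of moieties we still have $\textbf{b}^{T}S=0$, so exactly the computation used in the proof of the necessary condition applies and yields
\begin{equation}
\frac{d(\textbf{b}^{T}\textbf{A})}{dt}=\textbf{b}^{T}\textbf{f}_{nu}-\lambda\bigl(B_{me}+\textbf{b}^{T}\textbf{A}\bigr),
\end{equation}
with $B_{me}=b_{me}C_{me}\geq0$ and $\lambda=f_{me}/C_{me}$. Writing $y=\textbf{b}^{T}\textbf{A}\geq0$, everything then comes down to bounding the two driving terms: a strictly positive lower bound $\lambda\geq\lambda_{\min}>0$ on the dilution rate and a finite upper bound $\textbf{b}^{T}\textbf{f}_{nu}\leq\Phi_{\max}$ on the nutrient influx.

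First I would secure the lower bound on $\lambda$. By hypothesis every species in $\mathscr{S}_{me}$ admits a strictly positive lower bound, say $[A_{i}]\geq\varepsilon_{i}>0$ for $i\in\mathscr{S}_{me}$. Let $\textbf{A}_{\min}$ be the vector carrying these values on $\mathscr{S}_{me}$ and zero elsewhere. By the membrane-kinetics assumption $F_{me}(\textbf{A}_{\min})>0$, since $\textbf{A}_{\min}$ has non-zero components exactly on $\mathscr{S}_{me}$; and as concentrations remain non-negative we have $\textbf{A}\geq\textbf{A}_{\min}$ componentwise along the entire trajectory, so the monotonicity of $F_{me}$ gives $F_{me}(\textbf{A})\geq F_{me}(\textbf{A}_{\min})>0$, hence $\lambda\geq\lambda_{\min}:=F_{me}(\textbf{A}_{\min})/C_{me}>0$.

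Next I would treat the influx term. If $\textbf{f}_{nu}$ is a fixed control vector, $\textbf{b}^{T}\textbf{f}_{nu}$ is simply a constant. If it arises from the saturating diffusion law, then each $f_{nu,i}$ is largest at $[A_{i}]=0$, so $f_{nu,i}\leq\mathcal{D}_{i}[A_{i,out}]/(1+[A_{i,out}]/K_{i,out})$; since the components of $\textbf{b}$ are non-negative, summing over $\mathscr{S}_{nu}$ gives a finite constant $\textbf{b}^{T}\textbf{f}_{nu}\leq\Phi_{\max}$. Combining both bounds with $B_{me}\geq0$ and $y\geq0$ produces
\begin{equation}
\frac{dy}{dt}\leq\Phi_{\max}-\lambda_{\min}\,y,
\end{equation}
and a standard comparison (barrier) argument closes the proof: wherever $y>\Phi_{\max}/\lambda_{\min}$ the right-hand side is negative, so $y$ can never exceed $\max\bigl(y(0),\Phi_{\max}/\lambda_{\min}\bigr)$. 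For the ``in particular'' statement I would take $\textbf{b}=\textbf{m}$: the strictly positive mass vector is itself a positive combination of fed moieties, whence $\textbf{m}^{T}\textbf{f}_{nu}>0$ and $D=\textbf{m}^{T}\textbf{A}$ satisfies the same inequality and is bounded.

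I expect the genuine obstacle to be the lower bound on $\lambda$, that is, showing the dilution rate stays bounded away from zero. The finiteness of the influx and the comparison step are routine, but if $\lambda$ were allowed to approach $0$ the capping term $\lambda_{\min}y$ would vanish and $y$ could grow without bound; it is precisely the positivity assumption on $\mathscr{S}_{me}$ together with the monotone, positive-\emph{iff}-$\mathscr{S}_{me}$-present structure of $F_{me}$ that rules this out.
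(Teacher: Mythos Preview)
Your proof is correct and follows essentially the same route as the paper: derive the scalar ODE for $\textbf{b}^{T}\textbf{A}$, use the positive lower bounds on $\mathscr{S}_{me}$-species together with the monotonicity of $F_{me}$ to get $\lambda\geq\lambda_{\min}>0$, bound the influx from above, and conclude via a linear comparison inequality. The only cosmetic differences are that the paper keeps the $B_{me}$ term and solves the comparison ODE explicitly rather than invoking the barrier argument, and it uses the cruder bound $f_{nu,\max,i}=\mathcal{D}_{i}K_{i,out}$ in the diffusive case.
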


\begin{proof}
If all concentrations for chemical species in $\mathscr{S}_{me}$ have stricly positive lower bounds, then so have the membrane precursor incorporation rate $f_{me}(t)\geq f_{me,min}>0$ and growth rate $\lambda(t)\geq\lambda_{min}=f_{me,min}/C_{me}>0$, and the time evolution of $\textbf{b}^{T}\textbf{A}(t)$ can be bounded as follows:

\begin{equation}
\frac{d(\textbf{b}^{T}\textbf{A})}{dt}\leq\textbf{b}^{T}\textbf{f}_{nu,max}-\lambda_{min}(B_{me}+\textbf{b}^{T}\textbf{A})
\end{equation}

where $\textbf{f}_{nu,max}$ denotes either the fixed nutrient flux $\textbf{f}_{nu}$ vector (if taken as independent control vector) or the vector having as component along $A_{i}\in\mathscr{S}_{nu}$ $f_{nu,max,i}=\mathcal{D}_{i}K_{i,out}$ (if nutrient flux results from a saturating diffusion mechanism). This bound holding for any $t\geq0$, $\textbf{b}^{T}\textbf{A}(t)$ is thus upper bounded by the function $F(t)$ solution of the differential equation obtained by replacing the above inequality by the equality, and having the same initial conditions as $\textbf{b}^{T}\textbf{A}(t)$:

\begin{equation}
\textbf{b}^{T}\textbf{A}(t)\leq F(t)
\end{equation}

with $F(t)$ solution of:

\begin{equation}
\frac{d F}{dt}=\textbf{b}^{T}\textbf{f}_{nu,max}-\lambda_{min}(B_{me}+F)
\end{equation}

with the same initial conditions as $\textbf{b}^{T}\textbf{A}$, $F(t=0)=\textbf{b}^{T}\textbf{A}(t=0)=B_{0}$. Solving analytically for this differential equation gives:

\begin{equation}
F(t)=\frac{\textbf{b}^{T}\textbf{f}_{nu,max}}{\lambda_{min}}\lbrack1-(1-\frac{(B_{0}+B_{me})\lambda_{min}}{\textbf{b}^{T}\textbf{f}_{nu,max}} )\mathrm{e}^{-\lambda_{min}t}\rbrack-B_{me}
\end{equation}

It can be seen that $F(t)$ asymptotically converges towards $F(tÊ=+\infty)=\frac{\textbf{b}^{T}\textbf{f}_{nu,max}}{\lambda_{min}}-B_{me}$ which is finite. This shows that $\textbf{b}^{T}\textbf{A}(t)$ is upper bounded.

Applying this to the particular case of the mass, $\textbf{b}=\textbf{m}$, shows that if all concentrations for chemical species in $\mathscr{S}_{me}$ have strictly positive lower bounds, then the cytoplasmic density $D$ is bounded.
\end{proof}

\begin{lemma}
\label{diffequ}
Consider a differential equation of the form $dy/dt=\psi(y)$, where $\psi$ is a continuous monotonically decreasing function of $y$ such that $\psi(y=0)>0$. With positive initial condition $y(t=0)>0$, the corresponding trajectory has a strictly positive lower bound, i.e. $\forall t\geq 0$, $y(t)\geq y_{min}>0$. With zero initial condition $y(t=0)=0$, the corresponding trajectory has a strictly positive lower bound for times beyond any $t_{1}>0$, i.e. $\forall t_{1}>0$, $\exists y_{min}>0$ such that $\forall t\geq t_{1}$, $y(t)\geq y_{min}$.
\end{lemma}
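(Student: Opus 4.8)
The plan is to treat this as a one-dimensional phase-line problem, while being careful that $\psi$ is assumed only continuous and not Lipschitz, so I cannot invoke uniqueness of solutions nor standard phase-portrait monotonicity theorems. Everything will instead rest on one barrier (comparison) estimate driven by the sign of $\psi$. First I would locate the equilibrium level: set $y^{*}=\inf\{y\ge 0:\psi(y)\le 0\}$, with $y^{*}=+\infty$ if $\psi$ is strictly positive throughout $[0,+\infty)$. Continuity together with $\psi(0)>0$ gives $y^{*}>0$, and monotone decrease gives $\psi(y)>0$ for every $y\in[0,y^{*})$ (with $\psi(y^{*})=0$ when $y^{*}$ is finite).

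The key step is the estimate: \emph{for any level $c$ with $0<c<y^{*}$ and any time $t_{0}$ with $y(t_{0})\ge c$, one has $y(t)\ge c$ for all $t\ge t_{0}$.} I would argue by contradiction. If the set $\{t\ge t_{0}:y(t)<c\}$ were nonempty, let $\tau$ be its infimum; since $y\ge c$ on $[t_{0},\tau)$, continuity gives $y(\tau)=c$. As $c<y^{*}$ we get $\dot y(\tau)=\psi(c)>0$, so $y$ is strictly increasing through $\tau$ and therefore $y(t)>c$ on some interval $(\tau,\tau+\delta)$. But then no time in $(\tau,\tau+\delta)$ can satisfy $y<c$, forcing the infimum to exceed $\tau$ -- a contradiction. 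Crucially this uses only the sign $\psi(c)>0$ and never uniqueness of solutions, which is exactly why the bare continuity of $\psi$ is enough.

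The two assertions then follow. For the positive initial condition $y(0)>0$, I apply the barrier estimate from $t_{0}=0$: if $y(0)<y^{*}$ take $c=y(0)$ to obtain $y(t)\ge y(0)$, while if $y(0)\ge y^{*}$ let $c\uparrow y^{*}$ to obtain $y(t)\ge y^{*}$; in either case $y(t)\ge y_{\min}:=\min\bigl(y(0),y^{*}\bigr)>0$. For the zero initial condition $y(0)=0$, the relation $\dot y(0)=\psi(0)>0$ forces $y(s)>0$ for some $s\in(0,t_{1})$; applying the barrier estimate from $t_{0}=s$ then yields $y(t)\ge\min\bigl(y(s),y^{*}\bigr)>0$ for all $t\ge s$, and in particular for all $t\ge t_{1}$, which is the claimed bound.

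I expect the only genuine obstacle to be resisting the non-uniqueness trap: one must not argue that ``trajectories cannot cross the equilibrium by uniqueness'', but instead route the whole argument through the first-crossing-time barrier above, which remains valid when $\psi$ is merely continuous. The remaining points are routine bookkeeping: handling the no-equilibrium case $y^{*}=+\infty$ and the boundary case $y(0)=y^{*}$ uniformly by the limit $c\uparrow y^{*}$, and noting that the estimate is local in time, so it holds on the whole interval of existence of the trajectory without any need for global existence.
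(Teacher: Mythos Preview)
Your argument is correct and takes a genuinely different route from the paper's. The paper splits into two cases: if $\psi>0$ on all of $[0,+\infty)$ then $y$ is monotone increasing and the bound is immediate; if $\psi$ has a (unique) zero $y_{0}>0$, the paper introduces the Lyapunov quantity $\delta(t)=(y(t)-y_{0})^{2}$, observes that $d\delta/dt=2(y-y_{0})\psi(y)\le 0$ because the two factors have opposite signs, deduces that $|y(t)-y_{0}|$ is non-increasing and that $y(t)\to y_{0}$, and hence obtains $y(t)\ge\min(y(0),y_{0})$. Your barrier argument bypasses the Lyapunov function and the convergence analysis entirely: you show directly, via a first-crossing-time contradiction, that every level $c\in(0,y^{*})$ is a one-sided barrier, and then read off the same bound $y_{\min}=\min(y(0),y^{*})$. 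The paper's route yields the additional conclusion $y(t)\to y_{0}$, which it states but never uses downstream; your route is more elementary, handles the equilibrium and no-equilibrium cases uniformly, and---as you rightly stress---never invokes uniqueness of solutions, a point the paper's proof also does not rely on but leaves implicit.
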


\begin{proof}
As $\psi(y)$ is a continuous monotonically decreasing function having a positive value at $y=0$, either it converges towards some $\psi(y=+\infty)\geq0$, or it admits a single zero $y_{0}>0$, $\psi(y=y_{0})=0$. In the former case, $y(t)$ increases at all times and is thus lower bounded by its initial condition if positive $y(t=0)>0$, or by $y(t=t_{1})>0$ for any $t_{1}>0$ in case of zero initial condition. In the latter case, we shall prove that $y(t)$ converges asymptotically towards the zero $y_{0}$ of $\psi$. To this end, we define as $\delta$ the square of the distance between $y(t)$ and $y_{0}$:

\begin{equation}
\delta(t)=(y(t)-y_{0})^{2}
\end{equation}

Its derivative is given by:

\begin{equation}
\frac{d \delta}{dt}=2(y(t)-y_{0})\times dy/dt=2(y(t)-y_{0})\times\psi(y)
\end{equation}

$d\delta/dt\leq0$ because it is the product of two quantities having opposite signs. $\delta(t)$ is thus positive and monotonically decreasing, and thus converges towards some finite $\delta(t=+\infty)\geq 0$ for which $d\delta/dt(t=+\infty)=0$. This implies $y(t=+\infty)=y_{0}$. $y(t)$ converges asymptotically towards $y_{0}>0$ with a distance to this limit that is monotonically decreasing. In case of positive initial condition, $y(t)$ is thus always lower bounded by the smallest of $y_{0}$ and $y(t=0)$, both being positive. In case of zero initial condition, beyond any time $t_{1}>0$, $y(t)$ is lower bounded by the smallest of $y_{0}$ and $y(t=t_{1})$, both being positive.
\end{proof}

\begin{lemma}
\label{lowerbound}
For any moiety or positive linear combination of moieties $\textbf{b}$ such that $\mathscr{S}_{me}\subset supp(\textbf{b})$ and such that $\textbf{b}$ is fed with some nutrient flux (i.e. $\textbf{b}^{T}\textbf{f}_{nu}>0$ or equivalently $\mathscr{S}_{nu}\cap supp(\textbf{b}) \ne\varnothing$), then the corresponding quantity $\textbf{b}^{T}\textbf{A}$ has a strictly positive lower bound, for all $t\geq 0$ in case of positive initial conditions $\textbf{b}^{T}\textbf{A}(t=0)>0$, or for all $t\geq t_{1}>0$ in case of zero initial conditions $\textbf{b}^{T}\textbf{A}(t=0)=0$.
\end{lemma}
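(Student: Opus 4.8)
The plan is to reduce the vector dynamics to a scalar differential inequality for $y:=\textbf{b}^{T}\textbf{A}$ and then invoke Lemma~\ref{diffequ} through a comparison argument. Starting from the exact evolution equation already derived for a moiety,
\begin{equation}
\frac{dy}{dt}=\textbf{b}^{T}\textbf{f}_{nu}-\frac{f_{me}}{C_{me}}(B_{me}+y),
\end{equation}
I would first exploit the elementary observation that, since all concentrations are non-negative and $\textbf{b}$ has strictly positive components on its support, every species with $i\in supp(\textbf{b})$ obeys $[A_{i}]\leq y/b_{i}$. Because $\mathscr{S}_{me}\subset supp(\textbf{b})$, this controls precisely the species that drive the incorporation rate, and because $\mathscr{S}_{nu}\cap supp(\textbf{b})\neq\varnothing$ it controls at least one fed nutrient.

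The key step is to replace each state-dependent term by a monotone function of $y$ alone. For the influx, monotone decrease of each $f_{nu,i}$ in $[A_{i}]$ together with $[A_{i}]\leq y/b_{i}$ gives $\textbf{b}^{T}\textbf{f}_{nu}\geq\Phi_{nu}(y):=\sum_{i\in\mathscr{S}_{nu}\cap supp(\textbf{b})}b_{i}f_{nu,i}(y/b_{i})$, a continuous decreasing function with $\Phi_{nu}(0)>0$ (and in the fixed-control-vector case $\Phi_{nu}$ is simply a positive constant). For the outflux, monotone increase of $f_{me}$ in the $\mathscr{S}_{me}$ concentrations, each bounded by $y/b_{i}$, yields $f_{me}\leq\Phi_{me}(y)$ with $\Phi_{me}$ continuous increasing and $\Phi_{me}(0)=0$. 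Substituting both bounds produces
\begin{equation}
\frac{dy}{dt}\geq\psi(y):=\Phi_{nu}(y)-\frac{\Phi_{me}(y)}{C_{me}}(B_{me}+y),
\end{equation}
and $\psi$ is continuous, monotonically decreasing (a decreasing term minus a product of two non-negative increasing factors), and satisfies $\psi(0)=\Phi_{nu}(0)>0$.

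Finally I would close the argument by comparison: letting $\tilde{y}$ solve $d\tilde{y}/dt=\psi(\tilde{y})$ with the same initial value, the standard differential-inequality comparison gives $y(t)\geq\tilde{y}(t)$ for all $t\geq 0$. Since $\psi$ meets exactly the hypotheses of Lemma~\ref{diffequ}, $\tilde{y}$ admits a strictly positive lower bound --- for all $t\geq 0$ when $y(0)>0$, and for all $t\geq t_{1}>0$ when $y(0)=0$ --- and the same bound transfers to $y$.

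I expect the main obstacle to be the upper bound $f_{me}\leq\Phi_{me}(y)$: it relies on the incorporation rate being governed by the $\mathscr{S}_{me}$ concentrations, so that bounding each of them by $y/b_{i}$ bounds $f_{me}$, which is exactly where the hypothesis $\mathscr{S}_{me}\subset supp(\textbf{b})$ is indispensable; were $f_{me}$ to depend monotonically on species outside $supp(\textbf{b})$, no function of $y$ alone could dominate it. A secondary technical point is justifying the comparison step for a merely continuous (non-Lipschitz) $\psi$, which is handled by the usual maximal-solution version of the comparison principle.
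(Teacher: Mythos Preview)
Your proposal is correct and follows essentially the same route as the paper: bound each $[A_i]$ with $i\in supp(\textbf{b})$ by $y/b_i$, use monotonicity of $f_{me}$ (and, in the diffusion case, of $f_{nu,i}$) to obtain a scalar differential inequality $dy/dt\geq\psi(y)$ with $\psi$ continuous, decreasing, and $\psi(0)>0$, then compare with the scalar ODE and invoke Lemma~\ref{diffequ}. The only cosmetic difference is that you unify the fixed-control and saturating-diffusion cases through $\Phi_{nu}$, whereas the paper treats them separately; you are also more explicit about the comparison principle for non-Lipschitz $\psi$, which the paper leaves implicit.
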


\begin{proof}
We first consider the case where $f_{nu}$ is an independent control parameter. The growth rate $\lambda$ can be bounded by a function of the quantity $\textbf{b}^{T}\textbf{A}$:

\begin{equation}
\lambda=\frac{f_{me}(\{\lbrack A_{i}\rbrack\}_{i \in \mathscr{S}_{me}})}{C_{me}}=\frac{f_{me}(\{\frac{b_{i}\lbrack A_{i}\rbrack}{b_{i}}\}_{i \in \mathscr{S}_{me}})}{C_{me}}\leq\frac{f_{me}(\{\frac{\textbf{b}^{T}\textbf{A}}{b_{i}}\}_{i \in \mathscr{S}_{me}})}{C_{me}}
\end{equation}

The latter inequality results from $f_{me}$ being a monotonically increasing function of all concentrations for species in $\mathscr{S}_{me}$. Feeding this upper bound for $\lambda$ in the dynamical equation governing the evolution of $\textbf{b}^{T}\textbf{A}$ provides a lower bound for $d(\textbf{b}^{T}\textbf{A})/d$t:

\begin{equation}
\frac{d(\textbf{b}^{T}\textbf{A})}{dt}\geq\textbf{b}^{T}\textbf{f}_{nu}-\frac{f_{me}(\{\frac{\textbf{b}^{T}\textbf{A}}{b_{i}}\}_{i \in \mathscr{S}_{me}})}{C_{me}}(B_{me}+\textbf{b}^{T}\textbf{A})
\end{equation}

This bound holding for any $t\geq0$, $\textbf{b}^{T}\textbf{A}(t)$ is thus lower bounded by the function $E(t)$ solution of the differential equation obtained by replacing the above inequality by the equality, and having the same initial conditions as $\textbf{b}^{T}\textbf{A}(t)$:

\begin{equation}
\textbf{b}^{T}\textbf{A}(t)\geq E(t)
\end{equation}

with E(t) solution of:

\begin{equation}
dE/dt=\textbf{b}^{T}\textbf{f}_{nu}-\frac{\phi(E)}{C_{me}}(B_{me}+E)
\end{equation}

where:

\begin{equation}
\phi(E)=f_{me}(\{\frac{E}{b_{i}}\}_{i \in \mathscr{S}_{me}})
\end{equation}

with the same initial conditions as $\textbf{b}^{T}\textbf{A}$, $E(t=0)=\textbf{b}^{T}\textbf{A}(t=0)=B_{0}$. The right hand side ($RHS$) of the above differential equation is a continuous monotonically decreasing function $RHS(E)$, with $RHS(E=0)=\textbf{b}^{T}\textbf{f}_{nu}>0$. It follows from Lemma~\ref{diffequ} that $E(t)$ is lower bounded, and so is $\textbf{b}^{T}\textbf{A}(t)$.

The same reasoning holds for the case where $\textbf{f}_{nu}$ results from a saturating diffusion mechanism. For each $A_{i}$ in $\mathscr{S}_{nu}$, the quantity $b_{i}f_{nu,i}$ is proportional to $\lbrack A_{i,out}\rbrack-\lbrack A_{i}\rbrack$. The sum of these contributions thus takes the form:

\begin{equation}
\textbf{b}^{T}\textbf{f}_{nu}=p-\sum_{i\in\mathscr{S}_{nu}}q_{i}\lbrack A_{i}\rbrack
\end{equation}

where $p$ and $\{q_{i}\}_{i\in\mathscr{S}_{nu}}$ are positive constants (that depend on outside nutrient concentrations and on diffusion parameters, but that are independent of cytoplasmic concentrations). This can be upper bounded as follows:

\begin{equation}
\textbf{b}^{T}\textbf{f}_{nu}=p-\sum_{i\in\mathscr{S}_{nu}}q_{i}\frac{b_{i}\lbrack A_{i}\rbrack}{b_{i}}\geq p-\sum_{i\in\mathscr{S}_{nu}}q_{i}\frac{\textbf{b}^{T}\textbf{A}}{b_{i}}=p-r\textbf{b}^{T}\textbf{A}
\end{equation}

Feeding this upper bound in the previous expression for the time derivative of $\textbf{b}^{T}\textbf{A}$ gives:

\begin{equation}
\frac{d(\textbf{b}^{T}\textbf{A})}{dt}\geq p-r\textbf{b}^{T}\textbf{A}-\frac{f_{me}(\{\frac{\textbf{b}^{T}\textbf{A}}{b_{i}}\}_{i \in \mathscr{S}_{me}})}{C_{me}}(B_{me}+\textbf{b}^{T}\textbf{A})
\end{equation}

This bound holding for any $t\geq0$, $\textbf{b}^{T}\textbf{A}(t)$ is thus lower bounded by the function $E(t)$ solution of the differential equation obtained by replacing the above inequality by the equality, and having the same initial conditions as $\textbf{b}^{T}\textbf{A}(t)$:

\begin{equation}
\textbf{b}^{T}\textbf{A}(t)\geq E(t)
\end{equation}

with E(t) solution of:

\begin{equation}
\frac{d E}{dt}=p-rE-\frac{\phi(E)}{C_{me}}(B_{me}+E)
\end{equation}

with the function $\phi$ as defined above, and with the same initial conditions as $\textbf{b}^{T}\textbf{A}$, $E(t=0)=\textbf{b}^{T}\textbf{A}(t=0)=B_{0}$. Here again, the right hand side ($RHS$) of the above differential equation is also a continuous monotonically decreasing function $RHS(E)$, with $RHS(E=0)=p>0$. It follows from Lemma~\ref{diffequ} that $E(t)$ is lower bounded, and so is $\textbf{b}^{T}\textbf{A}(t)$.
\end{proof}

\begin{lemma}
\label{siphon}
If every siphon containing any species in $\mathscr{S}_{me}$ contains the support of a moiety, and if this moiety is fed, then the concentrations of all species in $\mathscr{S}_{me}$ have strictly positive lower bounds for all times after a sufficient time (i.e. are persistent).
\end{lemma}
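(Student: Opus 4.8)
The plan is to argue by contradiction, exploiting the classical link between siphons and persistence in the style of Angeli \emph{et al.}~\cite{Angeli}, but adapted to account for the influx $\textbf{f}_{nu}$, the membrane sink $\textbf{f}_{me}$ and the dilution $-\lambda\textbf{A}$. Suppose some species $A_{j}\in\mathscr{S}_{me}$ fails to be persistent, so that there is a sequence $t_{n}\to\infty$ with $[A_{j}](t_{n})\to 0$. Assuming the trajectory is precompact (see the obstacle below), I would pass to a subsequence along which $\textbf{A}(t_{n})\to p$, with $p$ a point of the $\omega$-limit set lying on the boundary of the non-negative orthant and $p_{j}=0$. Writing $Z=\{A_{i}\mid p_{i}=0\}$ for its zero-set, the goal is to show that $Z$ is a siphon containing $A_{j}\in\mathscr{S}_{me}$, invoke the hypothesis to obtain a fed moiety $\textbf{b}$ with $supp(\textbf{b})\subseteq Z$, and then contradict the positivity of the orbit.

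First I would establish the auxiliary fact that the zero-set $Z$ of a boundary $\omega$-limit point is a siphon. Because the $\omega$-limit set is invariant under the (forward and backward) flow and is contained in the closed orthant, any coordinate that vanishes at $p$ must have non-negative time derivative there: if $d[A_{i}]/dt|_{p}>0$ for some $A_{i}$ with $p_{i}=0$, then $[A_{i}](\phi_{-\varepsilon}(p))<0$ for small $\varepsilon>0$, which is impossible since $\phi_{-\varepsilon}(p)$ stays in the orthant. Applying this coordinate-wise: for $A_{i}\in Z$ the consumption terms vanish (they require $A_{i}$ as reactant), the dilution term $-\lambda p_{i}$ vanishes, and the membrane and flux terms are non-negative, so a reaction producing $A_{i}$ from reactants all lying outside $Z$ would force $d[A_{i}]/dt|_{p}>0$; the non-negativity just derived rules this out. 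Hence every reaction producing a species of $Z$ has a reactant in $Z$, i.e. $Z$ is a siphon.

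Since $A_{j}\in Z\cap\mathscr{S}_{me}$, the hypothesis supplies a fed moiety $\textbf{b}$ with $supp(\textbf{b})\subseteq Z$. Consider $V=\textbf{b}^{T}\textbf{A}$, whose dynamics is $dV/dt=\textbf{b}^{T}\textbf{f}_{nu}-\lambda(B_{me}+V)$ with $B_{me}=b_{me}C_{me}$. At $p$ one has $V(p)=\sum_{i\in supp(\textbf{b})}b_{i}p_{i}=0$, and the dilution term vanishes there: if $b_{me}>0$ then $A_{me}\in supp(\textbf{b})\subseteq Z$, so $[A_{me}](p)=0$ and hence $f_{me}(p)=\lambda(p)=0$; and if $b_{me}=0$ then $B_{me}=0$; in either case $\lambda(p)(B_{me}+V(p))=0$. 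The influx term, by contrast, is strictly positive: $\textbf{b}$ being fed, $supp(\textbf{b})$ contains a nutrient species, whose inward flux at $p$ is strictly positive (the constant positive control value, or $\mathcal{D}_{i}[A_{i,out}]/(1+[A_{i,out}]/K_{i,out})$ evaluated at $[A_{i}]=0$ for the saturating mechanism), while the remaining terms, being fluxes of other nutrients of $supp(\textbf{b})$ that also vanish at $p$, are non-negative. Thus $dV/dt|_{p}=\textbf{b}^{T}\textbf{f}_{nu}(p)>0$ with $V(p)=0$, and the same backward-time argument yields $V(\phi_{-\varepsilon}(p))<0$, contradicting $\textbf{b}\geq\textbf{0}$ and $\textbf{A}\geq\textbf{0}$. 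Therefore no species of $\mathscr{S}_{me}$ can be non-persistent, which is the claim.

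The main obstacle I anticipate is precompactness of the trajectory, needed both to extract $p$ and to invoke invariance of the $\omega$-limit set. The natural source of a uniform bound is Lemma~\ref{upperbound}, but that lemma presupposes strictly positive lower bounds on the $\mathscr{S}_{me}$ concentrations --- precisely what is being proven here --- so using it directly would be circular. Indeed, in the non-persistent scenario $f_{me}$ (and hence $\lambda$) may tend to zero, and the density $D=\textbf{m}^{T}\textbf{A}$ then obeys $dD/dt=\textbf{m}^{T}\textbf{f}_{nu}-f_{me}(m_{me}+D/C_{me})$, which permits $D$ to grow when the dilution stalls. I would resolve this either by bootstrapping boundedness and persistence simultaneously, or, more cleanly, by bounding only the coordinates of the candidate vanishing set: since the nutrient influx is bounded and the moiety quantities $\textbf{b}^{T}\textbf{A}$ satisfy the one-dimensional differential inequalities of Lemmas~\ref{diffequ} and~\ref{lowerbound}, one can trap the relevant species in a compact set before extracting $p$. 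The remainder is then the routine verification that the kinetic and flux terms behave as claimed at the boundary.
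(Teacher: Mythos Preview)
Your overall contradiction-via-siphon strategy is the paper's. Where you diverge is in extracting the contradiction once $Z$ is identified as a siphon containing $supp(\textbf{b})$ for a fed moiety $\textbf{b}$: you work at the boundary $\omega$-limit point $p$, compute $V(p)=0$ with $dV/dt|_{p}>0$, and invoke backward invariance of the $\omega$-limit set. The paper instead appeals directly to Lemma~\ref{lowerbound}, which furnishes an \emph{a priori} strictly positive lower bound on $\textbf{b}^{T}\textbf{A}(t)$ for all $t$ via a one-dimensional comparison inequality. Since $supp(\textbf{b})\subseteq Z$ forces $\textbf{b}^{T}\textbf{A}(t_{n})\to 0$ along the assumed sequence, this alone contradicts Lemma~\ref{lowerbound}, with no need for the point $p$, the backward flow, or your case split on $b_{me}$. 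This is exactly the resolution you gesture at in your last paragraph (``the moiety quantities $\textbf{b}^{T}\textbf{A}$ satisfy the one-dimensional differential inequalities of Lemmas~\ref{diffequ} and~\ref{lowerbound}''), but the paper treats it as the main argument rather than a patch; once adopted, your middle two paragraphs become redundant. Your precompactness worry is legitimate and is not fully dispelled in either account --- both lean on Proposition~\ref{Angeliprop} of Angeli \emph{et al.}\ to conclude that $Z$ is a siphon, and that step tacitly assumes a bounded trajectory --- but under the paper's route at least the contradiction step itself no longer depends on it.
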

\begin{proof}
It is based upon the previous work of Angeli \emph{et al}~\cite{Angeli} that gives a sufficient condition for persistence (of any species) in a chemical system. This sufficient condition is that any siphon includes the support of a moiety. Their proof relies upon a key result (Proposition~5.4. in~\cite{Angeli}) which we restate here in a slightly different wording:
\begin{proposition}
\label{Angeliprop}
If for some particular set of strictly positive initial conditions, the corresponding $\mathrm{\omega}$-limit set has zero concentration for some non-empty subset $Z$ of all species, then $Z$ is a siphon.
\end{proposition}
The proof of their sufficient condition for persistence then proceeds as follows~\cite{Angeli}: if the system were not persistent, then there would exist (for some particular set of strictly positive initial concentration vector $\xi$), some non-empty subset $Z$ of species and a suitable time sequence $\{t_{n}\}$ such that for any $A_{i}$ in $Z$, $\lbrack A_{i}\rbrack(t_{n})\rightarrow0$ at least along the suitable sequence $t_{n}\rightarrow+\infty$. And by virtue of Proposition~\ref{Angeliprop}, $Z$ would be a siphon. Since any siphon contains the support of a moiety, say $\textbf{b}$, then $\textbf{b}^{T}\textbf{A}$ would be such that $\textbf{b}^{T}\textbf{A}(t)\rightarrow0$ if $t_{n}\rightarrow+\infty$. But this contradicts the fact that $\textbf{b}^{T}\textbf{A}(t)=\textbf{b}^{T}\xi>0$.

Following the same line of thought for the proto-cell, if some species $A_{i}\in\mathscr{S}_{me}$ were not persistent, there would then exist some non-empty species subset $Z$ (including $A_{i}$) and a suitable time sequence $\{t_{n}\}$ such that for any $A_{j}$ in $Z$, $\lbrack A_{j}\rbrack(t_{n})\rightarrow0$ at least along the suitable sequence $t_{n}\rightarrow+\infty$. And by virtue of Proposition~\ref{Angeliprop}, $Z$ would be a siphon. Since any siphon $Z$ containing $A_{i}\in\mathscr{S}_{me}$ contains the support of a moiety $\textbf{b}$ that is being fed, $\textbf{b}^{T}\textbf{f}_{nu}>0$, then $\textbf{b}^{T}\textbf{A}$ would be such that $\textbf{b}^{T}\textbf{A}(t)\rightarrow0$ if $t_{n}\rightarrow+\infty$. But this contradicts Lemma~\ref{lowerbound} which shows that $\textbf{b}^{T}\textbf{A}(t)$ has a strictly positive lower bound.
\end{proof}

We now combine these four lemmas to prove the topological sufficient condition for proto-cell stationary growth.

\begin{theorem}
If every siphon containing any species in $\mathscr{S}_{me}$ also contains the support of a moiety $\textbf{b}$, and if this moiety is fed with some nutrient flux $\textbf{f}_{nu}$, $\textbf{b}^{T}\textbf{f}_{nu}>0$, then there exists a fixed point with all species in $\mathscr{S}_{me}$ being persistent and with positive growth rate for the proto-cell ODE system. Further, if every siphon (not just those containing any species in $\mathscr{S}_{me}$) also contains the support of a moiety (and that this moiety is fed with some nutrient flux), then all chemical species are persistent and present in the fixed point.
\end{theorem}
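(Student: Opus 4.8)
The plan is to chain the four lemmas so as first to produce an equilibrium and then to certify that its growth rate is strictly positive.

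First I would invoke Lemma~\ref{siphon}: the hypothesis that every siphon meeting $\mathscr{S}_{me}$ contains the support of a fed moiety is exactly its premise, so every species of $\mathscr{S}_{me}$ is persistent, i.e. there are a time $t_{1}$ and constants $c_{i}>0$ with $[A_{i}](t)\ge c_{i}$ for all $t\ge t_{1}$ and all $i\in\mathscr{S}_{me}$. Consequently $f_{me}$, and hence $\lambda=f_{me}/C_{me}$, is bounded below by some $\lambda_{min}>0$ on $[t_{1},\infty)$. Feeding this into Lemma~\ref{upperbound} with $\textbf{b}=\textbf{m}$ shows that the density $D=\textbf{m}^{T}\textbf{A}$ is bounded on $[t_{1},\infty)$; since $D$ is continuous and its growth rate is bounded above (precluding finite-time blow-up), it also stays finite on the compact interval $[0,t_{1}]$, hence is bounded on all of $[0,\infty)$. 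Lemma~\ref{Brouwer} then yields a stationary point $\textbf{A}_{st}\ge\textbf{0}$ with $g(\textbf{A}_{st})=0$.

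The crux is to show $\lambda(\textbf{A}_{st})>0$, since Lemma~\ref{Brouwer} only locates an equilibrium somewhere in the non-negative orthant while persistence is a property of the trajectory, not of this point. I would argue via the zero set $Z=\{i:[A_{i,st}]=0\}$. At the equilibrium, for each $i\in Z$ one has $0=(S\textbf{f})_{i}+f_{nu,i}-(\textbf{f}_{me})_{i}-\lambda[A_{i}]$ with the dilution term vanishing; any reaction producing $A_{i}$ must therefore carry zero rate, forcing one of its reactants into $Z$, so that $Z$ is a siphon. Moreover no nutrient can lie in $Z$: a nutrient at zero concentration has strictly positive influx ($f_{nu,i}>0$, resp.\ $f_{nu,i}([A_{i}]=0)>0$ in the saturating case), which would make its derivative strictly positive and contradict stationarity. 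Now suppose $\lambda(\textbf{A}_{st})=0$; then $f_{me}(\textbf{A}_{st})=0$, so some species of $\mathscr{S}_{me}$ vanishes and $Z$ is a siphon meeting $\mathscr{S}_{me}$. By hypothesis $Z$ then contains the support of a fed moiety $\textbf{b}$, and ``fed'' means $supp(\textbf{b})$ contains a nutrient; that nutrient would lie in $Z$, a contradiction. Hence $\lambda(\textbf{A}_{st})>0$, i.e. the equilibrium sustains genuine stationary growth.

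For the second, stronger statement I would reuse the zero-set argument without the restriction to $\mathscr{S}_{me}$: if every siphon contains the support of a fed moiety, then a non-empty $Z$ would again be a siphon containing a nutrient, which is impossible, so $Z=\varnothing$ and all species are present at $\textbf{A}_{st}$. Persistence of every species then follows from the Angeli-type argument of Lemma~\ref{siphon}: were some species non-persistent, its vanishing set along a suitable time sequence would be a siphon by Proposition~\ref{Angeliprop}, hence would contain the support of a fed moiety $\textbf{b}$ with $\textbf{b}^{T}\textbf{A}\to0$; this contradicts a strictly positive lower bound on $\textbf{b}^{T}\textbf{A}$, obtained as in Lemma~\ref{lowerbound} after noting that the already-established $\mathscr{S}_{me}$-persistence bounds $D$ and hence $\lambda$ from above, which lower-bounds every fed moiety even when it does not contain $\mathscr{S}_{me}$. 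The step I expect to be the main obstacle is precisely this transfer from the dynamic persistence of the trajectory to the static equilibrium produced by Lemma~\ref{Brouwer}: ruling out a degenerate zero-growth fixed point on the boundary is the delicate point, and it is the observation that the equilibrium zero set is a siphon which cannot contain a nutrient—together with the fact that a fed moiety's support always does—that closes the gap.
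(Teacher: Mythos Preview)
Your overall chain---Lemma~\ref{siphon} for persistence of $\mathscr{S}_{me}$, then Lemma~\ref{upperbound} with $\textbf{b}=\textbf{m}$ for boundedness of $D$, then Lemma~\ref{Brouwer} for the fixed point, then the stronger siphon hypothesis for the second claim---is exactly the paper's route.

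Where you diverge is in your middle paragraph, and this is a genuine addition rather than a deviation. The paper simply asserts that ``persistence of all species in $\mathscr{S}_{me}$ further ensures that this stationary point corresponds to a positive growth rate,'' without saying how a dynamical property of a trajectory transfers to the static equilibrium produced by Lemma~\ref{Brouwer}. You correctly identify this as the delicate step and close it with an independent argument: the zero set $Z$ of the equilibrium is itself a siphon, no nutrient can lie in $Z$ (since its influx would be strictly positive there), yet a fed moiety's support always contains a nutrient---so $Z$ cannot meet $\mathscr{S}_{me}$, forcing $\lambda(\textbf{A}_{st})>0$. This is a real strengthening of the paper's argument. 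Likewise, for the second part the paper merely re-invokes Lemma~\ref{siphon}, whose proof as written requires $\mathscr{S}_{me}\subset supp(\textbf{b})$; your observation that bounded $D$ caps $\lambda$ from above, which then yields a Lemma~\ref{lowerbound}-style lower bound even for fed moieties with $b_{me}=0$, is the correct way to extend persistence to all species. In short: same architecture, but you have filled in two steps that the paper leaves implicit.
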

\begin{proof}
If every siphon containing any species in $\mathscr{S}_{me}$ also contains the support of a moiety $\textbf{b}$, and if this moiety is fed with some nutrient flux $\textbf{f}_{nu}$, $\textbf{b}^{T}\textbf{f}_{nu}>0$, Lemma~\ref{siphon} ensures that all species in $\mathscr{S}_{me}$ are persistent. Applying Lemma~\ref{upperbound} to the particular case of the mass ensures that if all species in $\mathscr{S}_{me}$ are persistent, then the cytoplasmic density $D$ is bounded. And Lemma~\ref{Brouwer} ensures that if $D$ is bounded, the proto-cell dynamical system has a stationary point. Persistence of all species in $\mathscr{S}_{me}$ further ensures that this stationary point corresponds to a positive growth rate.

Further, if every siphon (not just those containing any species in $\mathscr{S}_{me}$) also contains the support of a moiety (and if this moiety is fed with some nutrient flux), then Lemma~\ref{siphon} ensures that all chemical species are persistent and present in the fixed point.
\end{proof}

\section{Discussion}
\label{discussion}
\subsection{Towards a stronger necessary condition}
Numerical analyses on randomly generated conservative CRNs suggest that the sufficient condition may also be necessary in the following sense: if there exists some siphon $Z$ containing some species in $\mathscr{S}_{me}$ that is not fed (while still having every moiety fed), it appears possible to find some system parameters (chemical reaction kinetics, membrane precursor incorporation kinetics, equivalent membrane concentration) for which no stationary proto-cell growth is possible. However, without any formal proof, this remains a conjecture at this stage.

\subsection{Production, degradation, and leakage of chemical species}
The proposed proto-cell model relies upon any conservative CRN. This also encompasses any situation where elementary building blocks (e.g. nucleotides) may be degraded or produced from simpler constituents inside the proto-cell.

The proved necessary and sufficient conditions even hold in the case where some species may leak out of the membrane. This is equivalent to having the membrane semi-permeable to a larger set of chemical species (nutrient species and leaking species), with only the nutrient species being present in the outside growth medium. The same necessary and sufficient conditions grant the existence of a stationary growth regime. In such a regime, stationary concentrations settle to values such that there is a net positive material influx to ensure a positive growth rate, i.e. the rate of mass nutrient influx exceeds the rate of mass leakage.

\subsection{Significance of the present work}
The sufficient condition for stationary growth is verified by a broad range of networks. Generating numerically large random conservative CRNs with arbitrary stoichiometry, we have found that once the CRN reaches a certain number of reactions, the quasi-totality of these networks are such that the only siphon is the full set of species, so that the sufficient condition is verified for any choice of membrane precursor $A_{me}$,  $\mathscr{S}_{me}$, or nutrients $\mathscr{S}_{nu}$. It could thus appear that a broad range of chemical reaction networks are compatible with the emergence of life.

However, thermodynamical constraints have not been taken into consideration in the present work. Further endowing random conservative CRNs with thermodynamically-consistent kinetics, we have found that the nutrient (resp. membrane precursor) should tend to have a high (resp. low) standard free Gibbs energy of formation and a low (resp. high) molecular weight for the proto-cell to exhibit a significant growth rate.~\cite{Bigan2}

There is another aspect of the model for which assumptions are actually quite stringent. These are assumptions made with respect to the membrane: self-assembly and semi-permeability to some nutrient, not to mention the importance of the self-assembly shape (filament vs. spherical) discussed above. This is in line with previous work that has stressed the importance of membranes in the origins of life.~\cite{Luisi}

The proposed model is applicable to proto-cells and not to current evolved unicellular organisms because the nutrient transport mechanism is passive diffusion for the proto-cell whereas it is active transport for current evolved cells. It can be verified that the necessary condition still holds in the case of active transport, but that the proposed proof for the sufficient condition does not. This is because the positive lower bound in Lemma~\ref{lowerbound} can no longer be guaranteed in the case of active transport as the nutrient influx rate goes to zero when all concentrations inside the cell go to zero.

\section{Conclusion}
\label{conclusion}
We have proposed a generic proto-cell model consisting of any chemical reaction network embedded within a membrane resulting from the self-assembly of one the chemical species participating in the reaction network. The proto-cell is assumed to have a filament shape and its membrane is assumed to be semi-permeable to some other chemical species (nutrients) diffusing from an outside growth medium into the cytoplasm. With these assumptions, we have proved necessary and sufficient conditions for proto-cell stationary growth. The necessary condition is purely algebraic and states that each moiety (corresponding to a conserved quantity when the chemical reaction network is closed) must be fed with some nutrient. The sufficient condition is topological and states that every siphon containing any species required for the membrane precursor incorporation contains the support of a moiety (and that this moiety is fed). The range of networks satisfying these conditions is potentially broad, which suggests that the most stringent requirements for the emergence of life may lie with the membrane, details of the embedded chemical reaction network being only of secondary importance.

\end{document}